\newcommand\copyrighttext{%
  \footnotesize \textcopyright 2019 IEEE. Personal use of this material is permitted. Permission from IEEE must be obtained for all other uses, in any current or future media, including reprinting/republishing this material for advertising or promotional purposes, creating new collective works, for resale or redistribution to servers or lists, or reuse of any copyrighted component of this work in other works.}
\newcommand\copyrightnotice{%
\begin{tikzpicture}[remember picture,overlay]
\node[anchor=south,yshift=10pt] at (current page.south) {\fbox{\parbox{\dimexpr\textwidth-\fboxsep-\fboxrule\relax}{\copyrighttext}}};
\end{tikzpicture}%
}
\tikzstyle{decision} = [diamond, draw, fill=blue!20,
\tikzstyle{decision1} = [diamond, draw, fill=blue!20,
\tikzstyle{block} = [rectangle, draw, fill=blue!20,
\tikzstyle{block2} = [rectangle, draw, fill=blue!20,
\tikzstyle{block3} = [rectangle, draw, fill=red!20,
\tikzstyle{line} = [draw, very thick, color=black!50, -latex']
\tikzstyle{cloud} = [draw, ellipse,fill=red!20, node distance=2.5cm, minimum height=2em]
\newtheorem{definition}{Definition}
\newtheorem{proposition}{Proposition}
\begin{document}

\title{External Constraint Handling for Solving Optimal Control Problems with Simultaneous Approaches and Interior Point Methods}

\author{Yuanbo Nie and Eric C. Kerrigan \IEEEmembership{Senior Member, IEEE}
\thanks{Yuanbo Nie and Eric C. Kerrigan are with the Department of Aeronautics, Imperial College London, SW7~2AZ, U.K. {\tt\small yn15@ic.ac.uk}, {\tt\small 
e.kerrigan@imperial.ac.uk}}%
\thanks{Eric C. Kerrigan is also with the Department of Electrical \& Electronic Engineering, Imperial College London, London SW7~2AZ, U.K.}%
\thanks{Accepted version to be published in: IEEE Control Systems Letters}%
}

\maketitle
\copyrightnotice

\begin{abstract}
Inactive constraints do not contribute to the solution of an optimal control problem, but increase the problem size and burden the numerical computations. We present a novel strategy for handling inactive constraints efficiently by systematically removing the inactive and redundant constraints.  The method is designed to be used together with  simultaneous approaches under a mesh refinement framework, with mild assumptions that the original problem has feasible solutions, and the initial solve of the problem is successful.  The method is tailored for interior point-based solvers, which are known to be very sensitive to the choice of initial points in terms of feasibility. In the example problem shown, the proposed scheme achieves more than a 40\% reduction in computation time. 
\end{abstract}

\begin{keywords}
constrained control, optimal control, predictive control
\end{keywords}

\section{Introduction} \label{sec:Intro}
Optimal control has been very popular for a wide range of applications, thanks to its  ability  to handle various types of constraints systematically. When formulating the optimal control problem (OCP), it is common practice to impose a large number of constraints to ensure all mission specifications are fulfilled. However, for the solution obtained, it is often the case that only a small subset of the imposed  inequality  constraints will actually be active. Furthermore, even for the ones in this small subset, the duration for which each constraint is active is generally much shorter than the time dimension of the OCP.  One example would be for the design of flight control systems: although all limits of the flight envelope need to be specified in the problem formulation for safety requirements, only in rare (abnormal) situations is it the case that some limits will be reached.   

In numerical optimal control, the OCPs are transcribed into sparse nonlinear programming (NLP) problems.  A distinction can be made here between simultaneous and sequential approaches, depending on whether all or just the control trajectories are discretized as decision variables \cite{biegler2007overview}. For this work, we focus on the simultaneous approach. 

The main computational overheads for solving the NLP problems are directly related to the number of decision variables and constraints. Thus there exist significant computational benefits to exclude inactive constraints in the problem formulation.  One possibility is to only include the constraints that are determined to be active. Based on this, an external strategy for the handling of path constraints with active-set based NLP solvers has been proposed in~\cite{chung2009external}. The idea is to first solve the unconstrained problem and determine which constraints are likely to be active based on constraint violations. These constraints are then added in the OCP and the problem is repetitively solved until all original constraints are satisfied. However, a fundamental problem arises when implementing the same idea on interior point method (IPM) based solvers, since good performance hinges on the initial point to be feasible, or at least close to feasible \cite{gondzio2016crash}. 

Another option is to remove constraints that are inactive. Removal of constraints for model predictive control (MPC) has been studied to accelerate computations for linear MPC~\cite{jost2016constraint}, tube-based robust linear MPC  \cite{jost2015accelerating} and recently nonlinear MPC \cite{constraintRemovalNMPC}, with computational benefits clearly demonstrated. However, all of these are based on a quadratic regulation cost, making their application specifically aimed at receding horizon control of regulation tasks.



In this paper, we introduce an external constraint handling (ECH) strategy that is tailored to IPM-based NLP solvers for solving a  variety of OCPs. Implemented together with mesh refinement (MR) schemes, constraints that do not contribute to the solution are systematically removed in the problem formulation. Special attention is paid to ensure feasibility of the initial point. As a result, significant computational savings can  be achieved. Section~\ref{sec:OptimizationBasedControl} gives an introduction to numerical optimal control with direct collocation. This is followed by a discussion of the proposed ECH strategy in Section~\ref{sec:ExternalConstraintHandling}. A flight control example is presented in Section~\ref{sec:ExampleFlightProfile} to demonstrate the computational benefits. 

\section{Numerical Optimal Control}
\label{sec:OptimizationBasedControl}
Generally speaking, optimization-based control requires the solution of  OCPs expressed in the general Bolza form:
\begin{subequations}
\label{eqn:OCPBolza}
\begin{equation}
\min_{x,u,p,t_0,t_f} \Phi(x(t_0),t_0,x(t_f),t_f,p)
+\int_{t_0}^{t_f} L(x(t),u(t),t,p) dt
\end{equation}
subject to
\begin{align}
\dot{x}(t)=f(x(t),u(t),t,p),\ &\forall t \in [t_0,t_f] \label{eqn:OCPBolzaDynamics}\\
c(x(t),u(t),t,p)\le 0,\ &\forall t \in [t_0,t_f] \label{eqn:OCPBolzaPathConstraint}\\
\phi(x(t_0),t_0,x(t_f),t_f,p) =0,\ &
\end{align}
\end{subequations}
 with $x: \mathbb{R} \rightarrow \mathbb{R}^n$ is the state trajectory of the system, $u: \mathbb{R} \rightarrow \mathbb{R}^m$ is the control input trajectory,   $p \in \mathbb{R}^s$ are static parameters, $t_0 \in \mathbb{R}$ and $t_f \in \mathbb{R}$ are the initial and terminal time.  $\Phi$ is the Mayer cost functional ($\Phi$: $\mathbb{R}^n \times \mathbb{R} \times \mathbb{R}^n \times \mathbb{R} \times \mathbb{R}^s \to \mathbb{R}$), $L$ is the Lagrange cost functional ($L:\mathbb{R}^n \times \mathbb{R}^m \times \mathbb{R} \times \mathbb{R}^s \to \mathbb{R}$), $f$ is the dynamic constraint ($f:\mathbb{R}^n \times \mathbb{R}^m \times \mathbb{R} \times \mathbb{R}^s \to \mathbb{R}^n$), $c$ is the path constraint ($c:\mathbb{R}^n \times \mathbb{R}^m \times \mathbb{R} \times \mathbb{R}^s \to \mathbb{R}^{n_g}$) and $\phi$ is the boundary condition ($\phi:\mathbb{R}^n \times \mathbb{R} \times \mathbb{R}^n \times \mathbb{R} \times \mathbb{R}^s \to \mathbb{R}^{n_q}$).

In practice, most optimal control problems formulated as~\eqref{eqn:OCPBolza} need to be solved with numerical schemes.  Compared to sequential methods,  simultaneous methods have some  advantages with regards to computational efficiency, as well as in the treatment of path constraints and unstable dynamics~\cite{biegler2007overview}. In this paper, we will demonstrate our proposed ECH strategy with the simultaneous approach of direct collocation. 

\subsection{Direct collocation methods}
\label{sec:DirectTranscriptionMethod}
Direct collocation methods can be categorized into fixed-order $h$ methods \cite{betts2010practical}, and variable-order $p$/$hp$ methods \cite{fahroo2008advances,liu2014hp}. Here, we only provide a high level overview. For a mesh of size $N:=\sum_{k=1}^K N^{(k)}$, the states can be approximated as

\begin{equation*}
\label{eqn:LGRStateApproximation}
x^{(k)}(\tau) \approx \bar{x}^{(k)}(\tau) := \sum_{j=1}^{N^{(k)}}\mathcal{X}_j^{(k)}\mathcal{B}_{j}^{(k)}(\tau),
\end{equation*}

within mesh interval $k$ $\in$ $\{1,\ldots, K\}$, where $N^{(k)}$ denotes the number of collocation points  for  interval $k$ and $\mathcal{B}_{j}^{(k)}(\cdot)$ are basis functions. For typical $h$ methods, $\tau \in \mathbb{R}^{N}$ takes values on the interval $[0,1]$ representing $[t_0,t_f]$, and $\mathcal{B}_{j}^{(k)}(\cdot)$ are elementary B-splines of various orders. For $p$/$hp$ methods, $\tau$ $\in$ $[-1,1]$ and $\mathcal{B}_{j}^{(k)}(\cdot)$ are Lagrange interpolating polynomials.  We use $X_j^{(k)}$ and $U_j^{(k)}$ to represent the approximated states and inputs at collocation points, e.g.\ $X_j^{(k)}=\bar{x}^{(k)}(\tau_j^{(k)}) \in \mathbb{R}^{n}$, where $\tau_j^{(k)}$ is the $j^\text{th}$ collocation point in mesh interval~$k$.

Consequently, the OCP~\eqref{eqn:OCPBolza} can be approximated by
\begin{subequations}
\label{eqn:LGRStateApproximationCost}
\begin{multline}
\min_{X,U,p,t_0,t_f}  \Phi(X_1^{(1)},t_0,X_{f}^{(K)},t_f,p)\\
+\sum_{k=1}^{K}\sum_{i=1}^{N^{(k)}} w_i^{(k)} L(X_i^{(k)},U_i^{(k)},\tau_i^{(k)},t_0,t_f,p)
\label{eqn:LGRStateApproximationCostDefect}
\end{multline}
for $i=1,\ldots,N^{(k)}$ and $k=1,\ldots,K$, subject to, 
\begin{align}
\sum_{j=1}^{N^{(k)}}\mathcal{A}_{ij}^{(k)}X_j^{(k)}+\mathcal{D}_{i}^{(k)}f(X_i^{(k)},U_i^{(k)},\tau_i^{(k)},t_0,t_f,p) = & 0 \\
\label{eqn:LGRStateApproximationPathConstraint}
c(X_i^{(k)},U_i^{(k)},\tau_i^{(k)},t_0,t_f,p)\le & 0  \\
\phi(X_1^{(1)},t_0,X_{f}^{(K)},t_f,p) =& 0
\end{align}
\end{subequations}
where $w_j^{(k)}$ are the quadrature weights for the chosen discretization, $\mathcal{A}^{(k)}$ is the numerical differentiation matrix with element $(i,j)$ denoted by $\mathcal{A}_{ij}^{(k)}$  and $\mathcal{D}_i^{(k)}$ is a row vector.   

The discretized problem can then be solved with off-the-shelf NLP solvers. The NLP solver generates a discretized solution $Z \coloneqq (X, U, p, \tau, t_0, t_f)$ as sampled data points. Interpolating splines may be used to construct an approximation of the continuous-time optimal trajectory $t\mapsto \tilde{z}(t) \coloneqq (\tilde{x}(t), \tilde{u}(t), t, p)$. The quality of the interpolated solution needs to be assured through error analysis, assessing the level of accuracy and constraint satisfaction  at a much higher resolution than the discretization mesh.  

If necessary, appropriate modifications must be made to the discretization mesh, until the solutions obtained with the new mesh fulfills all predefined error tolerance levels (e.g. the absolute local error $\eta_{tol}$ and the absolute local constraint violation $\varepsilon_{c_{tol}}$).  This process of MR is crucial in solving large-scale problems efficiently. For instance, it took 6 MR iterations for the example problem to be solved to a specific tolerance level. This level of accuracy was not achievable with any uniform mesh using the same desktop computer.  

\section{External Constraint Handling 
} \label{sec:ExternalConstraintHandling}

\subsection{Active and inactive constraints}
A constraint \eqref{eqn:OCPBolzaPathConstraint} is considered \emph{active} if its presence influences the solution $z^{\ast}(\cdot) \coloneqq (x^{\ast}(\cdot),u^{\ast}(\cdot),p^{\ast},t_0^{\ast},t_f^{\ast})$. A constraint is \emph{inactive} if it can be removed without affecting the solution. To clarify, consider a simplified problem:
\begin{equation*}
\label{eqn:SimplifiedOptProblem}
y^*\in \operatorname{arg}\min_{y} \Phi(y) \quad \text{subject to} \quad c(y)\le 0,
\end{equation*}
where we need to identify conditions such that constraints can be determined to be active. The most obvious criteria is when the solution $y^{\ast}$ is at the boundary of $c_i(y)\le 0$, i.e.\ $c_i(y^{\ast}) = 0$. Additionally,  consider the Lagrangian  $\mathcal{L}:=\Phi(y)+\lambda^T c(y)$ and the necessary optimality conditions (Karush-Kuhn-Tucker (KKT) conditions):
\begin{align*}
\begin{split}
\label{eqn:KKTSystem}
\frac{\partial \Phi(y)}{\partial y}\rvert_{y=y^{\ast}}+\lambda\frac{\partial c(y)}{\partial y}\rvert_{y=y^{\ast}}&=0,\\
c(y^{\ast}) \le 0, \quad \quad \lambda \ge 0, \quad \quad \lambda  \circ c & (y^{\ast}) =0,
\end{split}
\end{align*}
with $\circ$ the Hadamard product. From the included complementary slackness condition, we know that for strictly positive Lagrange multipliers ($\lambda_i > 0$), the corresponding solution will have $c_i(y^{\ast}) = 0$, i.e.\ the constraint is active. 


\subsection{Identifying active constraints in  optimal control}
\label{subsec:identifyActiveNumerialOCP}

 Theoretically, the above-mentioned analysis applies only to the continuous OCP formulation \eqref{eqn:OCPBolza}. Additional challenges will arise in practice when solving the discretized problem \eqref{eqn:LGRStateApproximationCost} numerically:  the NLP solver will only return the values of the discretized state $X$, input $U$, and Lagrange multipliers $\Lambda$ at collocation points. 

To estimate the constraint activation status in-between collocation points, a criteria can be introduced based on the interpolated continuous trajectory $\tilde{z}$.  By definition, inequality constraints are active if the magnitude of the differences between the actual constraint $c_l(\tilde{z}(\cdot))$ and the user-defined constraint bounds are zero. Due to numerical  inaccuracies, however, there will always be a remainder. Thus, we consider a constraint to be potentially active if this difference is smaller than the constraint violation tolerance $\epsilon_{c_{tol}}$.

Note that the word \emph{potentially} is used to emphasise that, for numerical schemes under limited machine precision, no concrete determination of constraint active status can be made. On the other hand, we know that only if the identified inactive constraints are truly inactive, then can they be removed from the OCP without affecting the solutions. Thus, it would be much more preferable to erroneously identify inactive constraints as active, than the opposite situation.

For this reason, we also use the multiplier information to enforce a larger (more conservative) selection of potentially active constraints. Here, a similar numerical challenge arises: with limited machine precision, even when the corresponding constraints are inactive, the multiplier values are rarely truly equal to zero.
  To identify the regions where the constraints are likely to be active, the numerical multiplier data $\Lambda$ is first normalized between 0 and 1 for each constraint $c_l(Z) \le 0$. Signal processing algorithms can be used to identify different intervals where the behaviour of Lagrange multipliers have significant changes,  for example using the \textsc{Matlab} \texttt{findchangepts} function.  


For each identified interval $\mathcal{T}_i$, the mean value of the normalized multipliers ($\bar{\Lambda}_{\mathcal{T}_i}$) is calculated and compared based on the following criteria:
\begin{equation*}
\begin{array}{ll}
\text{if } \bar{\Lambda}_{\mathcal{T}_i} \ge \zeta & \text{constraint potentially active in interval }\mathcal{T}_i\\
\text{otherwise} & \text{constraint potentially inactive in interval }\mathcal{T}_i\\
\end{array}
\end{equation*} 
with $\zeta$ a threshold parameter. 

To sum up, the following definition is used to determine whether the constraints are potentially active or potentially inactive at different collocation points.
\begin{definition}
\label{def: potentiallyactive}
A constraint  $c_l(X_i,U_i,\tau_i,t_0,t_f,p) \le 0$ is  \emph{potentially active} at time  $t_i:=t_i(\tau_i,t_0,t_f)$ if one of the following criteria is met:
\begin{itemize}
\item Between adjacent collocation points ($t \in [t_{i-1},t_{i+1}]$), $c_l(\tilde{x}(t),\tilde{u}(t),t,p) \ge  - \epsilon_{c_{tol}}$ holds, with $\epsilon_{c_{tol}} > 0$.
\item $\bar{\Lambda}_{\mathcal{T}_i} \ge \zeta$, with $t_i \in \mathcal{T}_i$.
\end{itemize}
Otherwise, a constraint  is  \emph{potentially inactive} at time~$t_i$.
\end{definition}

In addition to identifying the time instances at which certain constraints may be potentially inactive, it is also preferable to determine the sets of constraints that never become active at all times. 

\begin{definition}
\label{def: potentiallyredundant}
A constraint $c_l(X_i,U_i,\tau_i,t_0,t_f,p) \le 0$ is  \emph{potentially redundant} if for all $t_i:=t_i(\tau_i,t_0,t_f)$ with $i=1,\ldots, N$, the constraints $c_l(X_i,U_i,\tau_i,t_0,t_f,p) \le 0$ are potentially inactive. Otherwise, this set of constraints is \emph{potentially enforced}. 
\end{definition}

\subsection{Initialization for interior point methods}
\label{subsec:initialIPM}
Interior point methods (IPMs) for solving NLPs were introduced in the early 1960s
\cite{fiacco1963programming,fiacco1964sequential,fiacco1966extensions} and have became very popular in numerical optimal control. The  idea is to augment the objective function with barrier functions of constraints in order to enforce their satisfaction. Potential solutions will iterate only in the feasible region following the so-called central path, resulting in a very efficient algorithm. 

Standard interior point methods are  sensitive to the choice of a starting point. To ensure that the initial guess is strictly feasible with respect to constraints, various initialization methods have been developed (e.g.\ \cite{mehrotra1992implementation} and the collective study in \cite{betts2010practical}) and implemented in modern solvers.

To ensure reliable and efficient computation of the initialization algorithm, as well as the subsequent NLP iterations, several criteria~\cite{gondzio2016crash} can be formulated regarding ideal initial points for IPMs. The ideal initial point should:
\begin{itemize}
\item  satisfy or be close to primal and dual feasibility,
\item  be close to the central path,
\item  be as close to optimality as possible.
\end{itemize}


Because of these characteristics, external constraint handling schemes developed for active-set based SQP solvers, such as \cite{chung2009external}, are not  suitable for IPM-based solvers. By first solving the unconstrained problem and gradually adding constraints based on the constraint violation error, the solution of previous solves will all be infeasible for the new OCP formulation, and the solution may undergo drastic changes as well. For IPM-based NLP solvers, this would lead to a higher computational overhead for initialization, as well as higher chances for the iterations to frequently enter the slow and unreliable feasibility restoration phase. 

\subsection{Proposed Scheme for Constraint Handling}
Based on the criteria presented in Section \ref{subsec:identifyActiveNumerialOCP} and the characteristics of IPMs as discussed in Section \ref{subsec:initialIPM}, a strategy for efficiently handling constraints in OCPs solved with IPM-based NLP solvers is proposed, with the work-flow presented in Figure~\ref{fig:ECHFlowChart}. The approach is called \emph{external}, since the modifications to the OCP are made at the MR iteration level, instead of during the NLP iterations.

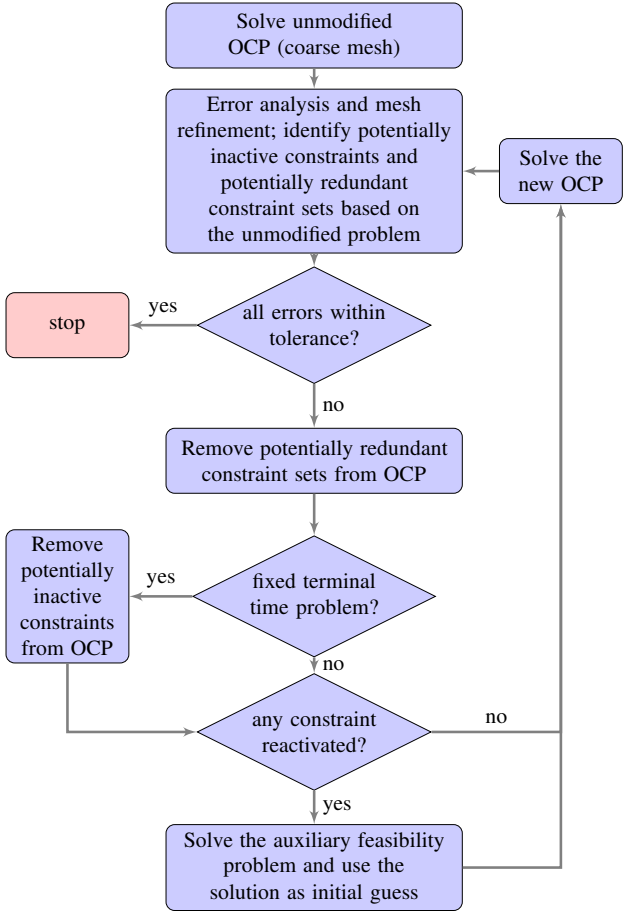
\begin{figure}
\centering
\scalebox{0.82}{\begin{tikzpicture}[align=center, scale=1, node distance = 2.2cm and 1cm]
    \node [block] (initSolve) {Solve unmodified OCP (coarse mesh)};
    \node [block, below of=initSolve] (postCheck) {Error analysis and mesh refinement; identify potentially inactive constraints and potentially redundant constraint sets based on the unmodified problem};
    \node [decision1, below of=postCheck, aspect=2] (errorCheck) {all errors within tolerance?};
    \node [block, below of=errorCheck] (removeRed) {Remove potentially redundant constraint sets from OCP};
    \node [decision, below of=removeRed, aspect=2, scale=1] (decideFixtf) {fixed terminal time problem?};
    \node [block2, left of=decideFixtf] (removeActive) {Remove potentially inactive constraints from OCP};
    \node [decision, below of=decideFixtf, aspect=2] (checkReActive) {any constraint reactivated?};
    \node [block, below of=checkReActive] (solveAuxOCP) {Solve the auxiliary feasibility problem and use the solution as initial guess};
    \node [block2, right of=postCheck] (resolve) {Solve the new OCP};
    \node [block3, left of=errorCheck] (stop) {stop};
    
    \path [line] (initSolve) -- (postCheck);
    \path [line] (postCheck) -- (errorCheck);
	\path [line] (errorCheck) -- node [color=black,above] {yes}(stop);
    \path [line] (errorCheck) -- node [color=black,right] {no}(removeRed);
    \path [line] (removeRed) -- (decideFixtf);
    \path [line] (decideFixtf) -- node [color=black,above] {yes} (removeActive);
    \path [line] (decideFixtf) -- node [color=black,right] {no} (checkReActive);
    \path [line] (removeActive) |- (checkReActive);
    \path [line] (checkReActive) -- node [color=black,right] {yes} (solveAuxOCP);
    \path [line] (checkReActive) -| node [color=black,near start, above] {no} (resolve);
    \path [line] (solveAuxOCP) -| (resolve);
    \path [line] (resolve) -- (postCheck);
\end{tikzpicture}}
\caption{Overview of the proposed external constraint handling scheme}
\label{fig:ECHFlowChart}
\end{figure}

The unmodified OCP is first solved on the initial coarse mesh. Even with all constraint equations included, the computation time will still be quite low at this stage due to the small problem size. Once the solution is obtained, potentially inactive constraints and potentially redundant constraint sets can be identified,  based on Definitions~\ref{def: potentiallyactive} and \ref{def: potentiallyredundant}, with potentially redundant constraints directly excluded from the OCP formulation. Furthermore, if the problem has a fixed terminal time, i.e.\ the time instance corresponding to a mesh point will not change, then potentially inactive constraints in the potentially enforced constraint sets may also be removed.

Recall that it is preferable to erroneously identify an inactive constraint as potentially active, rather than the opposite. It is therefore often a good idea in practice to enlarge the intervals with potential constraint activation by an interval of length $\beta$ in each direction, with $\beta$ either fixed or adapting during the MR process. This adaptation also guarantees the convergence of the overall scheme, i.e.\ in the worst case, $\beta$ can be sufficiently large to impose the constraints for the whole trajectory, with the original problem recovered.

In-between MR iterations, special attention must be made to constraints and constraint sets that were determined to be potentially inactive or redundant in the previous solves. If they never become active or enforced again, the constraint removal process may  continue  until MR is converged. But there will be the chance, after refining the mesh,  the constraint violation error analysis dictates  that certain constraints and constraint sets that have been removed earlier may become potentially active or enforced again.  If this happens, they need to be included again to ensure that the solution of the modified OCP is equivalent to the unmodified problem. 

Note that the previous solution will no longer be a feasible initial guess for the new problem formulation. To assist the subsequent solve of NLPs, the following auxiliary feasibility problem (AFP) can be solved before proceeding:
\begin{subequations}
\label{eqn:ECHFeasibilityProblem}
\begin{equation}
J^*:=\min_{X,U,p,t_0,t_f} \sum_{l=1}^{n_g} s_l
\end{equation}
subject to, for $i=1,\ldots,N^{(k)}$ and $k=1,\ldots,K$,
\begin{align}
\sum_{j=1}^{\tilde{N}}\mathcal{A}_{ij}^{(k)}X_j^{(k)}+\mathcal{D}_{i}^{(k)}f(X_i^{(k)},U_i^{(k)},\tau_i^{(k)},t_0,t_f,p) = & 0 \\
\label{eqn:ECHFeasibilityPathConstraint}
c(X_i^{(k)},U_i^{(k)},\tau_i^{(k)},t_0,t_f,p)\le s,   \text{ with }  s \ge & 0\\
\phi(X_1^{(1)},t_0,X_{K}^{(K)},t_f,p) =& 0
\end{align}
\end{subequations}
with $s \in \mathbb{R}^{n_g}$ slack variables. The initial guess for the AFP will be $\tilde{Z} \coloneqq (\tilde{X}, \tilde{U}, p, \tilde{\tau}, t_0, t_f)$, the values of the interpolated solution $\tilde{z}$ at the collocation points of the refined mesh. 

\subsection{Properties of the external constraint handling strategy}

It is possible to derive proofs of feasibility and optimality invariance for the removal of constraints on a given discretization mesh. However, with the size of the mesh changing throughout the refinement process, the analysis of errors, and identification of constraint activation status (Section \ref{subsec:identifyActiveNumerialOCP}) are all subject to considerable uncertainties. When a constraint or constraint set must be included again in the problem, it will be challenging to ensure that the subsequent OCP solve can be supplied with a feasible initial guess. The introduction of the AFP is the answer to this challenge, with its solution guaranteed to be a feasible point for the corresponding original OCP.

\begin{proposition}
\label{prop: AOCPSolutionFeasibility}
If the original OCP \eqref{eqn:LGRStateApproximationCost} has feasible points, then a  solution to the auxiliary feasibility problem~\eqref{eqn:ECHFeasibilityProblem} will be a feasible point of \eqref{eqn:LGRStateApproximationCost} on the same discretization mesh, and~\eqref{eqn:ECHFeasibilityProblem} will have corresponding objective value $J^{\ast}=0$.
\end{proposition}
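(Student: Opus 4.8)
The plan is to prove both assertions by a standard relaxation (squeeze) argument: bound the optimal value $J^*$ from below and above to conclude $J^* = 0$, then use the vanishing slacks to recover feasibility for the original problem. The structural observation that drives everything is that the AFP~\eqref{eqn:ECHFeasibilityProblem} and the original OCP~\eqref{eqn:LGRStateApproximationCost} share exactly the same equality constraints (the defect/collocation equations and the boundary condition $\phi = 0$) on the common mesh; the AFP differs only by relaxing each inequality path constraint through the non-negative slack vector $s$ and by replacing the original cost with $\sum_{l=1}^{n_g} s_l$. For the lower bound, since $s \ge 0$ is enforced on the entire feasible set, the objective $\sum_{l=1}^{n_g} s_l$ is non-negative there, so $J^* \ge 0$.

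Next I would establish the matching upper bound by explicit construction. By hypothesis the original OCP admits a feasible point $Z_0$, and I would check that pairing $Z_0$ with $s = 0$ is feasible for the AFP: the defect equations and the boundary condition are inherited and hold by feasibility of $Z_0$, the relaxed constraint $c(Z_0) \le s$ reduces to $c(Z_0) \le 0$ (true for $Z_0$), and $s = 0 \ge 0$ holds trivially. This candidate has objective value $0$, so $J^* \le 0$; combined with the lower bound this gives $J^* = 0$, the second claim.

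For the feasibility claim, let $(Z^*, s^*)$ be any solution of the AFP. From $J^* = \sum_l s^*_l = 0$ together with $s^* \ge 0$, the sum can vanish only if $s^*_l = 0$ for every $l$; substituting $s^* = 0$ into the relaxed path constraint returns $c(Z^*) \le 0$, which is precisely the original path constraint, while the shared equality constraints are already satisfied. Hence $Z^*$ is feasible for~\eqref{eqn:LGRStateApproximationCost}. The argument has no genuinely hard step; the only point warranting care is that componentwise non-negativity together with a vanishing total sum collapses the relaxation entirely, and that because a single slack $s$ must dominate every collocation-point evaluation of $c$, driving it to zero restores the original inequality simultaneously at all mesh points.
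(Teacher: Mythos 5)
Your proof is correct and follows essentially the same route as the paper's: exhibit a feasible point of the original OCP paired with $s=0$ to show the AFP attains objective value zero, then use $s \ge 0$ to conclude. In fact your write-up is more complete than the paper's terse argument, since you explicitly carry out the squeeze ($J^* \ge 0$ and $J^* \le 0$) and spell out why any AFP solution must have identically zero slacks and hence be feasible for \eqref{eqn:LGRStateApproximationCost} — steps the paper leaves implicit.
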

\begin{proof}
If $Z \coloneqq (X,U,\tau,t_0,t_f,p)$ is a feasible point of \eqref{eqn:LGRStateApproximationCost}, then \eqref{eqn:LGRStateApproximationPathConstraint} must hold. With $s \ge 0$, the solution for the AFP~\eqref{eqn:ECHFeasibilityProblem} will be the situation where $\sum s_l = 0$, and \eqref{eqn:LGRStateApproximationPathConstraint} guarantees the existence of such a solution.
\end{proof}

Now, for the very same reason, we need to obtain a suitable initial guess for the slack variables $s$ in the AFP. One possible way is by calculating the constraint violation errors of the interpolated solutions on the refined mesh. 

\begin{proposition}
\label{prop: AOCPFeasibility}
Define $\tilde{s} \in \mathbb{R}^{N \times n_g}$ as the absolute local constraint violation error $\epsilon_c(t)$ calculated at the collocation points of the refined mesh, with the updated initial guess $\tilde{Z} \coloneqq (\tilde{X}, \tilde{U}, p, \tilde{\tau}, t_0, t_f)$. For any set $\{\bar{s} \in \mathbb{R}^{n_g} \mid \bar{s_l} \ge \max_{i=1,\dots,N}(\tilde{s}_{i,l}),l=1,\ldots,n_g\}$ implemented as the initial guess for $s$, the AFP \eqref{eqn:ECHFeasibilityProblem} will have a strictly feasible initial point with respect to the constraints \eqref{eqn:ECHFeasibilityPathConstraint}.
\end{proposition}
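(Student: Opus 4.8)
The plan is to show that the proposed initial guess makes every path constraint in the AFP strictly satisfied. The key observation is that in the AFP~\eqref{eqn:ECHFeasibilityProblem}, the path constraints take the relaxed form $c(X_i^{(k)},U_i^{(k)},\tau_i^{(k)},t_0,t_f,p) \le s$, so feasibility is governed not by the raw constraint values but by their relationship to the slack variables $s$. The initial guess for the primal variables is fixed at $\tilde{Z}$, which determines the constraint values $c(\tilde{X}_i,\tilde{U}_i,\tau_i,t_0,t_f,p)$ at each collocation point; I only need to verify that the chosen $\bar{s}$ dominates all of these values with strict inequality.

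First I would make explicit the connection between the constraint values at $\tilde{Z}$ and the local violation error $\tilde{s}$. By construction, $\tilde{s}_{i,l}$ is defined as the absolute local constraint violation error $\epsilon_c$ for constraint $l$ at collocation point $i$, so it bounds (in fact equals, up to sign) the amount by which $c_l$ exceeds its bound of $0$ at that point; concretely $c_l(\tilde{Z}_i) \le \tilde{s}_{i,l}$ for each $i$ and $l$. Next I would substitute the candidate initial guess $\bar{s}$ into the relaxed path constraint and check the sign of the gap. For each constraint index $l$ and each collocation point $i$, we have
\begin{equation*}
c_l(\tilde{Z}_i) \le \tilde{s}_{i,l} \le \max_{i=1,\dots,N}(\tilde{s}_{i,l}) \le \bar{s}_l,
\end{equation*}
where the middle inequality follows because $\max$ dominates every entry and the last follows from the hypothesis on $\bar{s}$. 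To obtain \emph{strict} feasibility, which is what IPM initialization requires, I would note that the constraint relaxation can be made strict either by the condition $\bar{s}_l \ge \max_i \tilde{s}_{i,l}$ being combined with a small positive margin, or by observing that the violation error is an absolute (hence nonnegative) quantity so that taking $\bar{s}_l$ at least the maximum guarantees $c_l(\tilde{Z}_i) - \bar{s}_l \le 0$ uniformly, with strictness ensured by the strictly positive barrier offset applied by the solver. I would state the strictness argument carefully, since the inequality chain above is only weak.

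The main obstacle I anticipate is precisely the gap between weak and strict feasibility. The chain of inequalities above delivers $c_l(\tilde{Z}_i) \le \bar{s}_l$, i.e.\ weak satisfaction of \eqref{eqn:ECHFeasibilityPathConstraint}, but the statement claims a \emph{strictly} feasible initial point. Closing this gap requires either interpreting $\bar{s}_l \ge \max_i \tilde{s}_{i,l}$ together with the natural perturbation the IPM applies when pushing the initial point into the interior, or else sharpening the definition so that $\bar{s}_l$ strictly exceeds the maximum violation at every point where equality could otherwise hold. I would therefore devote the crux of the argument to justifying strictness — likely by arguing that at any collocation point where $c_l(\tilde{Z}_i) = \bar{s}_l$ the slack can be nudged upward without altering the structure of the problem, so that a strictly feasible point in the interior always exists. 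The remaining elements (dynamic constraint and boundary condition feasibility) are inherited directly from the fact that $\tilde{Z}$ is obtained by interpolating the previous feasible solution onto the refined mesh, and these do not interact with the slack variables, so they require only a brief remark rather than detailed computation.
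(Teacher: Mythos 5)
Your proposal takes essentially the same route as the paper's proof: both establish $c_l(\tilde{Z}_i) \le \tilde{s}_{i,l}$ from the definition of the absolute violation error (the paper does this via an explicit case split on whether $c_l(\tilde{Z}_i)$ is negative, zero, or positive) and then chain $\tilde{s}_{i,l} \le \max_{i}\tilde{s}_{i,l} \le \bar{s}_l$ to conclude satisfaction of \eqref{eqn:ECHFeasibilityPathConstraint}. The weak-versus-strict gap you flag is genuine, but it is a defect of the paper itself rather than of your argument --- the paper's own proof likewise ends with the non-strict inequality $c_l(\tilde{Z}_i) \le \bar{s}_l$ despite the proposition claiming a strictly feasible point, so your explicit acknowledgment of this issue (and the suggested fix of taking $\bar{s}_l$ strictly above the maximum violation) is, if anything, more careful than the original.
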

\begin{proof}
$\tilde{s}_{i,l}:=|\min(-c_l(\tilde{X}_i$, $\tilde{U}_i$, $p$, $\tilde{\tau}_i$, $t_0$, $t_f),0)|$ by definition, for all $i=1,\ldots,N$ and $l=1,\ldots,n_g$,
\begin{itemize}
\item if $c_l(\tilde{X}_i$, $\tilde{U}_i$, $p$, $\tilde{\tau}_i$, $t_0$, $t_f) < 0$, i.e.\ the constraint is satisfied and the solution is not on the boundary, then $\tilde{s}_{i,l}=0$, thus $c_l(X_i,U_i,\tau_i,t_0,t_f,p) < \tilde{s}_{i,l}$ holds.
\item if $c_l(\tilde{X}_i$, $\tilde{U}_i$, $p$, $\tilde{\tau}_i$, $t_0$, $t_f) = 0$ (constraint satisfied and solution is on the boundary) or $c_l(\tilde{X}_i$, $\tilde{U}_i$, $p$, $\tilde{\tau}_i$, $t_0$, $t_f) > 0$ (constraint violation occurs), then $c_l(X_i,U_i,\tau_i,t_0,t_f,p) = \tilde{s}_{i,l}$ holds.
\end{itemize} 
Therefore, $c_l(X_i,U_i,\tau_i,t_0,t_f,p) \le \tilde{s}_{i,l}$ will always be true. From $\bar{s}_l \ge \max_{i=1,\dots,N}(\tilde{s}_{i,l})$, it can  be concluded that $c_l(X_i,U_i,\tau_i,t_0,t_f,p) \le \bar{s}_l$ holds.
\end{proof}

We can now show that, except for the initial solve, all subsequent solves will have feasible initial guesses.

\begin{proposition}
\label{prop: OCPStrictlyFeasible}
If the unmodified OCP has feasible points, and the initial solve of the discretized OCP has been successful, then all subsequent solves of OCPs and AFPs with mesh refinement schemes and the proposed external constraint handling method will have a feasible initial point with respect to the constraints \eqref{eqn:LGRStateApproximationPathConstraint} or \eqref{eqn:ECHFeasibilityPathConstraint}.
\end{proposition}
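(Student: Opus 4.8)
The plan is to argue by induction over the mesh refinement (MR) iterations, treating the two branches of the flowchart in Figure~\ref{fig:ECHFlowChart}---the direct OCP resolve and the AFP-assisted resolve---as the two cases of the inductive step. The base case is supplied by hypothesis: the initial solve on the coarse mesh is successful, so it returns a feasible discretized solution, from which an interpolated trajectory $\tilde{z}$ can be constructed. The inductive hypothesis is that at MR iteration $k$ the solve was successful and produced a point feasible for \eqref{eqn:LGRStateApproximationPathConstraint}; I then show that the next problem to be solved---whether an OCP or an AFP on the refined mesh---is furnished with a feasible initial guess.

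First I would dispose of the AFP branch, which is immediate. Whenever constraints are reactivated, the scheme solves the AFP \eqref{eqn:ECHFeasibilityProblem} before the OCP. Taking the resampled interpolated solution $\tilde{Z}$ together with the slack initialization $\bar{s}$ of Proposition~\ref{prop: AOCPFeasibility} gives, by that proposition, a strictly feasible initial point for \eqref{eqn:ECHFeasibilityPathConstraint}. Moreover, since the original OCP is assumed to possess feasible points on the refined mesh as well, Proposition~\ref{prop: AOCPSolutionFeasibility} guarantees that the AFP attains $J^{\ast}=0$ and that its minimizer is feasible for the reactivated OCP \eqref{eqn:LGRStateApproximationCost}; this minimizer is exactly the initial guess handed to the subsequent OCP solve, closing that branch.

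Next I would treat the direct-resolve branch, where no constraint is reactivated. Here the constraint set of the new OCP is a subset of the one carried from iteration $k$, so the new problem is a relaxation of a problem that the previous interpolated solution already satisfies; the feasibility invariance under removal of potentially inactive constraints and potentially redundant constraint sets (Definitions~\ref{def: potentiallyactive} and~\ref{def: potentiallyredundant}) then keeps $\tilde{Z}$ feasible for the retained constraints, so it serves as the feasible initial guess. Combining the two branches completes the inductive step, and hence every subsequent solve receives a feasible initial point.

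The main obstacle I anticipate is precisely the direct-resolve branch, and specifically the claim that the resampled interpolant $\tilde{Z}$ remains feasible for the retained constraints after the mesh has changed. Because $\tilde{z}$ is only an approximation, its values at the newly introduced collocation points need not satisfy the retained constraints exactly, and ``no reactivation'' must be shown to be strong enough to rule out such violations---that is, any constraint that would be violated by $\tilde{Z}$ on the refined mesh is, by Definition~\ref{def: potentiallyactive} with tolerance $\epsilon_{c_{tol}}$, flagged as potentially active and therefore counted as a reactivation, sending execution into the AFP branch instead. Making this equivalence precise, so that the two branches are genuinely exhaustive and the no-reactivation case truly guarantees feasibility rather than mere near-feasibility, is the crux; the remainder reduces to the two already-established propositions.
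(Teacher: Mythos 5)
Your overall architecture---a case split between an AFP-assisted resolve and a direct resolve, with the AFP branch dispatched by Proposition~\ref{prop: AOCPFeasibility} followed by Proposition~\ref{prop: AOCPSolutionFeasibility}---matches the paper, and your AFP branch is correct. The gap is in the direct branch, and it is exactly the point you flag as ``the crux'': with your reactivation-based dichotomy it cannot be closed. Your sketched resolution---that any constraint violated by $\tilde{Z}$ on the refined mesh is flagged potentially active by Definition~\ref{def: potentiallyactive} and is ``therefore counted as a reactivation''---fails for constraints that were never removed. A constraint retained in the OCP (already deemed potentially active in the previous iteration) can still be violated by the resampled interpolant at newly introduced collocation points; detecting precisely this is the purpose of the constraint-violation error analysis in mesh refinement. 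For such a constraint, being flagged potentially active changes nothing: it was never removed, so there is no reactivation, the flowchart's test answers ``no,'' and the direct branch is entered with an infeasible initial guess. Feasibility invariance under constraint removal does not rescue this either, since it presupposes a point feasible for the full constraint set, whereas only the old discrete solution $Z$ at the \emph{old} collocation points was known to be feasible, not $\tilde{Z}$ at the new ones.

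The paper's proof avoids this problem by using a different, infeasibility-based dichotomy: it conditions directly on whether $\tilde{Z}$ satisfies \eqref{eqn:LGRStateApproximationPathConstraint} on the new mesh, and asserts that whenever it does not, the scheme solves the AFP. In other words, in the ``extreme format'' of the scheme to which Proposition~\ref{prop: OCPStrictlyFeasible} refers, the trigger for solving the AFP is infeasibility of the interpolated guess itself, not reactivation of previously removed constraints. Under that trigger the two cases are exhaustive by construction: either $\tilde{Z}$ already satisfies the retained path constraints and serves as the feasible initial point, or the AFP is solved and the chain Proposition~\ref{prop: AOCPFeasibility} then Proposition~\ref{prop: AOCPSolutionFeasibility} applies. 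To repair your argument you would need either to adopt this infeasibility-based trigger, or to strengthen the reactivation test so that a retained constraint violated at new collocation points also routes execution to the AFP; as written, under the flowchart's literal reactivation test, the implication you need in the direct branch (``no reactivation'' implies $\tilde{Z}$ feasible) is simply not true.
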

\begin{proof}
For any interpolated solution $\tilde{Z} \coloneqq (\tilde{X}, \tilde{U}, p, \tilde{\tau}, t_0, t_f)$ on the new mesh, if \eqref{eqn:LGRStateApproximationPathConstraint} is not satisfied, then the corresponding AFP will be solved and Proposition~\ref{prop: AOCPFeasibility} ensures that AFP will have a feasible initial guess. From Proposition~\ref{prop: AOCPSolutionFeasibility} the solution of the AFP will be a feasible initial guess for the subsequent OCP solve.
\end{proof}

\subsection{A practically more efficient alternative implementation}
\label{subsec:alternativeImplementation}
The proposed external constraint handling scheme can guarantee feasible initial points under conditions stated in Proposition~\ref{prop: OCPStrictlyFeasible}. Nevertheless, it is not efficient in practice. The frequent solve of AFPs are not only time consuming, but are often not necessary.

Recall the conditions regarding ideal initial guesses for IPM methods. It is not necessary to satisfy primal and dual feasibility --- rather, one only needs to be close to fulfillment. In practice, the computational performance of a modern IPM that uses near-feasible initial guesses is very much comparable to using feasible initial points. In addition, constraint satisfaction for  simple bounds can be computationally much easier to achieve by the NLP solver, thus there is no need to enforce those through the solve of an AFP. 

 Thus, a practically more efficient version of the external handling scheme can be formulated, by restricting the conditions for solving the AFP to the MR iteration when a potentially redundant path constraint set turns into a potentially enforced path constraint set.


\section{Example}
\label{sec:ExampleFlightProfile}

To demonstrate the computational benefits of the proposed ECH scheme, we show an problem that is relatively large in the horizon length. The task involves finding a fuel-optimal flight path of a commercial aircraft where authorities have identified five non-flight zones (NFZ) for the aircraft to avoid.


From simple flight mechanics with a flat earth assumption, both the longitudinal and lateral motion of the aircraft can be described by the  dynamic equations
\begin{subequations}
\label{eqn:DynamicsF50Flight}
\begin{align*}
\dot{h}(t) =&v_{T}(t)\sin(\gamma(t)) \\
\dot{POS_N}(t) =&v_{T}(t)\cos(\gamma(t))\cos(\chi(t)) \\
\dot{POS_E}(t) =&v_{T}(t)\cos(\gamma(t))\sin(\chi(t)) \\
\begin{split}
\dot{v}_{T}(t)=&\frac{1}{m(t)}(T(v_{C}(t),h(t),\Gamma(t))\\
&-D(v_{T}(t),h(t),\alpha(t))-m(t)g\sin(\gamma(t)))
\end{split}\\
\begin{split}
\dot{\gamma}(t) =&\frac{1}{m(t)v_{T}(t)}(L(v_{T}(t),h(t),\alpha(t))\cos(\phi(t))\\
&-m(t)g\cos(\gamma(t)))\end{split}\\
\dot{\chi}(t) = & \frac{L(v_{T}(t),h(t),\alpha(t))\sin(\phi(t))}{\cos(\gamma(t))m(t)v_{T}(t)}\\
\dot{m}(t)=&FF(h(t),v_{C}(t),\Gamma(t))
\end{align*}
\end{subequations}
with $h$ the altitude [m], $POS_{N}$ and $POS_{E}$ the north and east position [m], $v_{T}$ the true airspeed [m/s],  $\gamma$ the flight path angle [rad], $\chi$ the tracking angle  [rad], and $m$ the mass [kg]. $T$, $L$, $D$ are the thrust, lift and drag forces. $FF$ is the fuel flow model, requiring an input of calibrated airspeed $v_{C}$, which can be related to $v_{T}$ via a conversion.



Additionally, $g=9.81$\,m/s$^2$ is  gravitational acceleration. We have three control inputs,  the roll angle $\phi$ in [rad],  the throttle settings $\Gamma$ normalized between 0 and 1, and  the angle of attack $\alpha$ in [rad]. Further details of the modelling of a Fokker 50 aircraft can be obtained from \cite{F50Performance}.

The avoidance of NFZs can be implemented with the following path constraints 
\begin{align*}
\begin{split}
(POS_{N}(t)-POS_{N_{N}})^2+(POS_E(t)-POS_{E_{N}})^2 \ge r_{N}^2
\end{split}
\end{align*}
\par\noindent with $POS_{N_{N}}$ and $POS_{E_{N}}$ the north and east position of the center of the non-flight zones, and $r_{N}$ the radius. 

The problem will have the boundary cost $\Phi(x(t_0),t_0,x(t_f),t_f,p)=-m(t_f)$ (maximize the mass at the end of the flight, with fixed $t_f=7475$\,s), subject to the dynamics and path constraints. Furthermore, variable simple bounds are imposed together with the boundary conditions.



The OCP is transcribed using the optimal control software \texttt{ICLOCS2} \cite{ICLOCS2} with Hermite-Simpson discretization, and solved with IPM-based NLP solver \texttt{IPOPT} \cite{wachter2006implementation} (version 3.12.4).  All computation results shown were obtained on an Intel Core i7-4770 computer with 16G of RAM. Figure~\ref{fig:F50Flight} illustrates the results solved to a user-defined tolerance.

\begin{figure}[tb]
	\begin{center}
	\includegraphics[width=.33\textwidth]{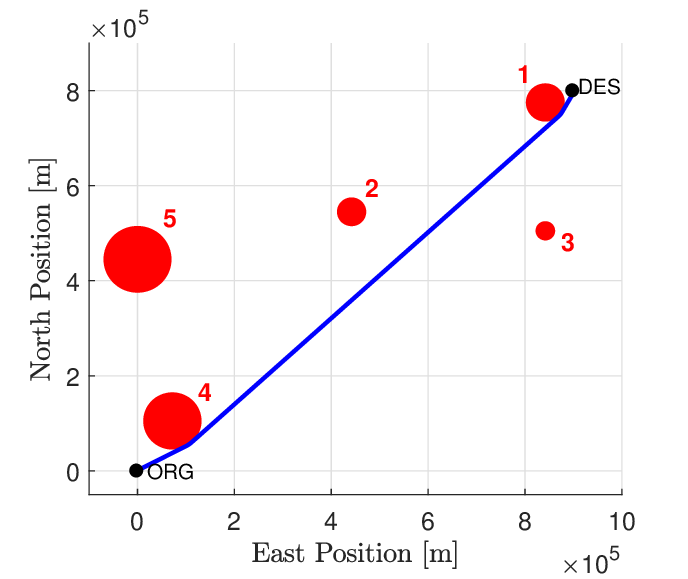}
	\caption{The fuel-optimal flight profile for the example problem}
	\label{fig:F50Flight} 
	\end{center}
\end{figure}



\begin{table*}[hbt]
  \small
	\begin{center}
	\caption{External constraint handling history: Aircraft Flight Profile ($t_0=$0\,s, $t_f=7475$\,s, $\beta=0$\,s)}
	 \label{tab: ExF50FlightECHHistory}
		\begin{tabular}{>{}c|>{}c|>{}c|>{}c|>{}c|>{}c|>{}c}
		 & \multicolumn{6}{>{}c}{Constraint Activation Intervals Implemented in the OCP [$s$]}\\ \cline{2-7}
		 & MR Iteration 1 & MR Iteration 2 & MR Iteration 3 & MR Iteration 4 & MR Iteration 5 & MR Iteration 6\\ 
		 & (K = 40) & (K = 81) & (K = 136) & (K = 176) & (K = 207) & (K = 256)\\ \hline
		 NFZ 1 & [$t_0$, $t_f$] & [6900 7092] & [6996 7114] & [7056 7162] & [7071 7140] & [7074 7150] \\ \hline
		 NFZ 2/3/5 & [$t_0$, $t_f$] & $\emptyset$ & $\emptyset$ & $\emptyset$ & $\emptyset$ & $\emptyset$ \\ \hline
		 NFZ 4 & [$t_0$, $t_f$] & [671 863] & [743 942] & [767 875] & [774 880] & [774 878] \\ \hline
		\end{tabular} 
	\end{center}
\end{table*}

Using the proposed external constraint handling scheme, we first solved the problem with a worst-case buffer interval setting of $\beta=0$\,s. The history for constraint activation intervals implemented in the OCP are demonstrated in Table~\ref{tab: ExF50FlightECHHistory}. It can be seen that in the initial solve (MR iteration 1), all constraint sets are enforced and all constraints are treated as potentially active. Based only on the solution from this coarse grid, the ECH method correctly identified that the constraint sets related to NFZ~2, 3 and 5 are all potentially redundant. It also determined that constraints related to NFZ~1 are only potentially active near the end of the flight, whereas for NFZ~4 they are at the beginning of the mission. 

In later iterations of the MR, these intervals  had only some minor adjustments. It can be seen that without implementing any buffer interval, we do see occasions where active constraints got erroneously identified as inactive for finer meshes. However, the constraint violation error analysis in the MR process correctly identified these situations and made corrections accordingly. 


Table~\ref{tab: ExF50FlightCompPerf} compares the computational performance of the standard solve, as well as solves with external constraint handling   using the alternative ECH implementation (allowing near-feasible initial guesses) described in Section \ref{subsec:alternativeImplementation}, with two different buffer interval settings. With the worst-case setting of $\beta=0$, the total computation time saw a 29\% reduction, while the number of MR iterations remained the same. Choosing a much more conservative buffer interval setting of $\beta=0.1(t_f-t_0)=747.5$\,s further improved this time reduction to 40\%, due to the fact that initial  guesses were feasible for all later MR iterations.

For real-time applications, it is useful to consider the re-computation time for solving the OCP problem again with the final (refined) discretization mesh, using the obtained solutions as initial guesses. For the ECH method with $\beta=747.5$\,s, the time taken was only half compared to the standard solve. Therefore, the benefits of the proposed scheme can be seen for both off-line and online applications. 

\begin{table}[tb]
  \small
	\begin{center}
	\caption{Computational performance comparison}
	 \label{tab: ExF50FlightCompPerf}
		\begin{tabular}{>{}c|>{}c|>{}c|>{}c}
		 & Standard & With ECH & With ECH\\
		 & Solve & ($\beta=0$\,s) & ($\beta=747.5$\,s)\\ \hline
	     \textbf{Total Comp.} & \multirow{2}{*}{130.54} & 92.14 & 78.27  \\
	      \textbf{Time [s]} & & (29\% lower) & (40\% lower)\\ \hline
	     \textbf{MR Iterations} & 6 & 6 & 6 \\
	      \hline
	     \textbf{Re-comp. } & \multirow{2}{*}{21.82} & 13.50 & 10.78 \\
	      \textbf{Time [s]} & & (38\% lower) & (50\% lower) \\ \hline
	      \textbf{Fuel Used [kg]} & 1787.6 & 1787.6 & 1787.6\\
	      \hline
		\end{tabular} 
	\end{center}
\end{table}

\section{Conclusions}
\label{sec:conlclusions}
A strategy has been developed to systematically identify and handle inactive constraints and redundant constraint sets for numerically solving optimal control problems together with mesh refinement schemes. Unlike previous work that would always result in infeasible initial guesses  for intermediate steps,  the proposed scheme is capable of providing guarantees on the feasibility of initial points  in mesh refinement iterations. The method only requires some mild conditions --- the original OCP to have feasible points, and the initial solve of the discretized OCP to be successful, making it particularly suitable for OCP toolboxes that utilize IPM-based NLP solvers in lowering the computational cost. 

 Due to limitations in time and space, we only illustrated the proposed external constraint handling method with direct collocation. We note that similar benefits should be obtainable, after some adaptations, with other simultaneous methods, such as direct multiple shooting. This might also be possible for sequential methods, such as direct single shooting. Moreover, it would be interesting to test a slightly altered version of our proposed constraint removal method with active-set based NLP solvers --- this could be compared under a mesh refinement scheme against earlier work \cite{chung2009external}, which exploits constraint addition instead.  



\bibliography{JournalBib2018} 

\begin{thebibliography}{10}

\bibitem{biegler2007overview}
L.~T. Biegler, ``An overview of simultaneous strategies for dynamic
  optimization,'' {\em Chemical Engineering and Processing: Process
  Intensification}, vol.~46, no.~11, pp.~1043--1053, 2007.

\bibitem{chung2009external}
H.~Chung, E.~Polak, and S.~Sastry, ``An external active-set strategy for
  solving optimal control problems,'' {\em IEEE Transactions on Automatic
  Control}, vol.~54, no.~5, pp.~1129--1133, 2009.

\bibitem{gondzio2016crash}
J.~Gondzio, ``Crash start of interior point methods,'' {\em European Journal of
  Operational Research}, vol.~255, no.~1, pp.~308--314, 2016.

\bibitem{jost2016constraint}
M.~Jost, G.~Pannocchia, and M.~M{\"o}nnigmann, ``Constraint removal in linear
  mpc: An improved criterion and complexity analysis,'' in {\em European
  Control Conference (ECC)}, pp.~752--757, IEEE, 2016.

\bibitem{jost2015accelerating}
M.~Jost, G.~Pannocchia, and M.~M{\"o}nnigmann, ``Accelerating tube-based model
  predictive control by constraint removal.,'' in {\em 54th IEEE Conference on
  Decision and Control (CDC)}, pp.~3651--3656, 2015.

\bibitem{constraintRemovalNMPC}
R.~Dyrska, , and M.~M{\"o}nnigmann, ``Accelerating nonlinear model predictive
  control by constraint removal,'' in {\em Proc.\ 6th IFAC Conference on
  Nonlinear Model Predictive Control}, 2018.

\bibitem{betts2010practical}
J.~T. Betts, {\em Practical Methods for Optimal Control and Estimation Using
  Nonlinear Programming: Second Edition}.
\newblock Advances in Design and Control, Society for Industrial and Applied
  Mathematics, 2010.

\bibitem{fahroo2008advances}
F.~Fahroo and I.~M. Ross, ``Advances in pseudospectral methods for optimal
  control,'' in {\em AIAA guidance, navigation and control conference and
  exhibit}, p.~7309, 2008.

\bibitem{liu2014hp}
F.~Liu, W.~W. Hager, and A.~V. Rao, ``An hp mesh refinement method for optimal
  control using discontinuity detection and mesh size reduction,'' in {\em 53th
  IEEE Conference on Decision and Control (CDC)}, pp.~5868--5873, IEEE, 2014.

\bibitem{fiacco1963programming}
A.~V. Fiacco and G.~P. McCormick, ``Programming under nonlinear constraints by
  unconstrained minimization: a primal-dual method,'' tech. rep., Research
  Analysis Corp Mclean VA, 1963.

\bibitem{fiacco1964sequential}
A.~V. Fiacco and G.~P. McCormick, ``The sequential unconstrained minimization
  technique for nonlinear programing, a primal-dual method,'' {\em Management
  Science}, vol.~10, no.~2, pp.~360--366, 1964.

\bibitem{fiacco1966extensions}
A.~V. Fiacco and G.~P. McCormick, ``Extensions of {SUMT} for nonlinear
  programming: equality constraints and extrapolation,'' {\em Management
  Science}, vol.~12, no.~11, pp.~816--828, 1966.

\bibitem{mehrotra1992implementation}
S.~Mehrotra, ``On the implementation of a primal-dual interior point method,''
  {\em SIAM Journal on optimization}, vol.~2, no.~4, pp.~575--601, 1992.

\bibitem{F50Performance}
Delft University of Technology, {\em Performance Model Fokker 50}, 2010.

\bibitem{ICLOCS2}
Y.~Nie, O.~J. Faqir, and E.~C. Kerrigan, ``{ICLOCS}2: Solve your optimal
  control problems with less pain,'' in {\em Proc.\ 6th IFAC Conference on
  Nonlinear Model Predictive Control}, 2018.

\bibitem{wachter2006implementation}
A.~W{\"a}chter and L.~T. Biegler, ``On the implementation of an interior-point
  filter line-search algorithm for large-scale nonlinear programming,'' {\em
  Mathematical programming}, vol.~106, no.~1, pp.~25--57, 2006.

\end{thebibliography}
\bibliographystyle{ieeetr}

\end{document}